\documentclass[9pt,twocolumn]{article}
\usepackage[fleqn]{amsmath}
\usepackage{amsthm}
\usepackage[varg]{txfonts}
\usepackage[dvips]{graphicx}
\usepackage{subfigure}
\usepackage{psfrag}

\newtheorem{theorem}{Theorem}
\newtheorem{corollary}[theorem]{Corollary}
\newtheorem{proposition}[theorem]{Proposition}


\title{Graph isomorphism completeness for trapezoid graphs}
\author{Asahi Takaoka\footnote{
The author is with the 
Department of Communications and Computer Engineering, 
Tokyo Institute of Technology, Tokyo 152-8550, Japan. 
E-mail: asahi@eda.ce.titech.ac.jp.}}

\begin{document}
\maketitle
\begin{abstract}
The complexity of the graph isomorphism problem for trapezoid graphs 
has been open over a decade. 
This paper shows that the problem is GI-complete. 
More precisely, we show that the graph isomorphism problem is GI-complete 
for comparability graphs of partially ordered sets 
with interval dimension 2 and height 3. 
In contrast, the problem is known to be solvable in polynomial time 
for comparability graphs of partially ordered sets 
with interval dimension at most 2 and height at most 2. 

\end{abstract}
\section{Introduction}
Let $G$ be an undirected simple graph, and 
let $V(G)$ and $E(G)$ be the vertex set and the edge set of $G$, respectively. 
Two graphs $G_1$ and $G_2$ are said to be \emph{isomorphic} 
if there is a bijection $\phi:V(G_1) \to V(G_2)$ such that 
for every pair of vertices $u, v \in V(G_1)$, 
$uv \in E(G_1)$ if and only if $\phi(u)\phi(v) \in E(G_2)$. 
Such $\phi$ is called \emph{isomorphism} from $G_1$ to $G_2$. 
We denote by $G_1 \simeq G_2$ if $G_1$ and $G_2$ are isomorphic. 
The \emph{graph isomorphism problem} asks 
whether two given graphs are isomorphic. 
Although the problem is in NP, 
it is not known to be NP-complete or polynomial-time solvable. 

The graph isomorphism problem 
for particular classes of graphs has been investigated. 
See~\cite{BC79-TR,Spinrad03,Uehara14-DMTCS} for survey. 
The problem for a graph class is said to be \emph{GI-complete} 
if it is polynomial-time equivalent to the problem for general graphs. 
For some graph classes, the problem 
is not known to be GI-complete or polynomial-time solvable. 
One of such graph classes is 
trapezoid graphs~\cite{Spinrad03,Uehara14-DMTCS,UTN05-DAM}. 
Trapezoid graphs are natural generalization of 
interval graphs and permutation graphs, for which the problem can be solved 
in linear time~\cite{Colbourn81-Networks,LB79-JACM,Spinrad85-SIAMCOMP}. 
We show in this paper that the problem is GI-complete for trapezoid graphs. 

\section{GI-completeness for trapezoid graphs}
Let $L_1$ and $L_2$ be two lines parallel to $x$-axis in the $xy$-plane. 
A graph $G$ is called a \emph{trapezoid graph}~\cite{CK87-CN,DGP88-DAM} 
if for each vertex $v \in V(G)$, there is a trapezoid $T_v$ 
with parallel sides along $L_1$ and $L_2$ such that 
for any pair of vertices $u, v \in V(G)$, $uv \in E(G)$ if and only if 
$T_u$ and $T_v$ intersect. 
The set $\{T_v \mid v \in V(G)\}$ is called 
a \emph{trapezoid representation} of $G$. 

The \emph{complement} of a graph $G$ is the graph $\bar{G}$ 
such that $V(\bar{G}) = V(G)$ and for any pair of vertices $u, v \in V(G)$, 
$uv \in E(\bar{G})$ if and only if $uv \notin E(G)$. 
Notice that for any two graphs $G_1$ and $G_2$, 
$G_1 \simeq G_2$ if and only if $\bar{G_1} \simeq \bar{G_2}$. 
To prove the GI-completeness of trapezoid graphs, 
we consider the complements of trapezoid graphs, 
which are known as comparability graphs of partially ordered sets 
with interval dimension at most 2~\cite{DGP88-DAM}. 

A \emph{partially ordered set} (\emph{poset} for short) is 
a pair $P = (X, \preceq)$, 
where $X$ is a finite set and $\preceq$ is a binary relation on $X$ 
that is reflexive, antisymmetric, and transitive. 
We denote $x \prec y$ if $x \preceq y$ and $x \neq y$. 
Two elements $x, y \in X$ are said to be \emph{comparable} in $P$ 
if either $x \prec y$ or $x \succ y$, and 
are said to be \emph{incomparable} otherwise. 
A subset $Y \subseteq X$ is called a \emph{chain} of $P$ 
if any pair of elements of $Y$ are comparable in $P$. 
A chain of $P$ is \emph{maximum} 
if no other chain contains more elements than it, and 
the \emph{height} of $P$ is the number of elements in a maximum chain of $P$. 

A poset $P = (X, \preceq)$ is called an \emph{interval order} 
if for each element $x \in X$, there is an interval $I_x = [l_x, r_x]$ 
on the real line such that for any pair of elements $x, y \in X$, 
$x \prec y$ if and only if $r_x < l_y$. 
Here, we use $<$ to denote the ordering of points on the real line, 
while $\prec$ indicates the relation of a poset. 
The set $\{I_x \mid x \in X\}$ is called 
an \emph{interval representation} of $P$. 
A family $\{ P_i = (X, \preceq_i) \mid 1 \leq i \leq k \}$ of posets 
on the same set is said to \emph{realize} a poset $P = (X, \preceq)$ 
if for any $x, y \in X$, 
$x \preceq y$ if and only if $x \preceq_i y$ 
for every $i \in \{1, 2, \ldots, k\}$. 
The \emph{interval dimension} of a poset $P$ 
is the minimum number $k$ of interval orders that realize $P$. 

A graph $G$ is called a \emph{comparability graph} 
of a poset $P = (X, \preceq)$ 
if there is a bijection assigning each vertex $v \in V(G)$ 
to an element $x_v \in X$ 
such that for any $u, v \in V(G)$, $uv \in E(G)$ 
if and only if $x_u$ and $x_v$ are comparable in $P$. 
We have the following, which is proved in the next section. 

\begin{theorem}\label{theorem:GI-trapezoid}
The graph isomorphism problem is GI-complete 
for comparability graphs of posets 
with interval dimension 2 and height 3. 
\qed
\end{theorem}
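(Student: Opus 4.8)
The plan is to establish the two reductions witnessing GI-completeness. One direction is immediate: since the stated class is a subclass of all graphs, its isomorphism problem reduces trivially to general graph isomorphism. The content is the reverse reduction, a polynomial-time map $G \mapsto \Gamma(G)$ sending an arbitrary graph $G$ to the comparability graph of a poset $P_G$ of interval dimension $2$ and height $3$, such that $G_1 \simeq G_2$ if and only if $\Gamma(G_1) \simeq \Gamma(G_2)$. Because the comparability graphs of posets with interval dimension at most $2$ are exactly the complements of trapezoid graphs, as recalled above, this also yields GI-completeness for trapezoid graphs after complementation, using $G_1 \simeq G_2 \iff \bar{G_1} \simeq \bar{G_2}$.

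To keep the interval dimension under control I would not specify $P_G$ by listing its comparabilities and then hoping the dimension is small; instead I would define $P_G$ by exhibiting an explicit realizer, that is, two interval orders $I_1, I_2$ (equivalently, a trapezoid representation of $\overline{\Gamma(G)}$) whose intersection is $P_G$. This makes the bound ``interval dimension at most $2$'' hold by construction, and it then remains only to check that the dimension is genuinely $2$ rather than $1$ and that the height is exactly $3$; both are local checks, namely exhibiting one incomparable pair that no single interval order realizes, together with one chain of three elements and the absence of any chain of four.

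For the encoding itself I would use a three-level poset, which is precisely the feature forcing height $3$ and the feature unavailable at height $2$. Starting from a convenient GI-complete source, either general graphs through their incidence structure or bipartite graphs directly, I would place one family of gadget elements as minimal elements, a second family as maximal elements, and one element per edge in the middle, wired so that the comparabilities between the middle elements and the extremal elements record exactly the endpoint-incidences of $G$, with transitivity filling in the induced vertex-to-vertex comparabilities. The delicate point is that a naive incidence poset of this kind has unbounded interval dimension, so the middle and extremal gadgets must be laid out as trapezoids in a specific staircase pattern that simultaneously realizes the intended comparabilities and admits the two-interval-order realizer above. I would also attach small rigid marker gadgets, distinguishable by degree or by local comparability pattern, to pin down the three levels and to separate the vertex-role elements from the edge-role elements.

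The forward direction of faithfulness is routine: an isomorphism of $G_1$ and $G_2$ permutes vertex-gadgets and edge-gadgets compatibly and hence extends to an isomorphism of $P_{G_1}$ and $P_{G_2}$, and thus of $\Gamma(G_1)$ and $\Gamma(G_2)$. The main obstacle is the backward direction: I must show that every isomorphism $\Gamma(G_1) \to \Gamma(G_2)$ respects the intended decomposition, mapping each level to the corresponding level and each vertex-gadget, respectively edge-gadget, to a vertex-gadget, respectively edge-gadget, so that it restricts to a genuine isomorphism of $G_1$ and $G_2$. This is where the marker gadgets and the rigidity of the staircase layout must do their work, and it stands in tension with the interval-dimension-$2$ constraint, since rigidity tempts one to add structure while the dimension bound forbids it. Reconciling these two demands, enough rigidity to read $G$ off from $\Gamma(G)$ yet few enough comparabilities to retain a two-interval-order realizer, is the crux of the argument, and I would devote the bulk of the proof to verifying it.
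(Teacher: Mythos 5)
There is a genuine gap, and it is precisely at the point you yourself flag as the crux. You propose a three-level poset whose comparabilities ``record exactly the endpoint-incidences of $G$, with transitivity filling in the induced vertex-to-vertex comparabilities.'' That poset is the incidence poset of $G$ (plus transitive closure), and as you correctly observe its interval dimension is unbounded. But your proposed fix --- keeping this poset and finding a clever ``staircase'' trapezoid layout for it --- cannot work: interval dimension is an invariant of the poset, so no layout whatsoever realizes a poset of interval dimension greater than $2$ by two interval orders. The representation cannot repair the poset; the poset itself must be changed. The missing idea in the paper is exactly this change: in addition to the incidence relations $a \prec c_e \prec b$, one declares $a \prec b$ for \emph{every} $a \in A$ and \emph{every} $b \in B$ (equivalently, the subgraph of the comparability graph induced by $A \cup B$ is made complete bipartite, regardless of the edges of $G$). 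Once the $A$--$B$ comparabilities no longer depend on $G$, the order-reversal trick works: in the first interval order the points of $A$ and $B$ appear in one linear order around an origin, in the second both orders are reversed, and each edge element $c_e$ ($e = a_ib_j$) gets an interval containing the origin whose endpoints are placed so that the two interval orders' intersection leaves $c_e$ above exactly $a_i$ and below exactly $b_j$. All of $G$'s structure is then carried solely by the middle-level elements.

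This modification also dissolves the second difficulty you anticipate. You expect the backward direction (every isomorphism of the constructed graphs respects the levels and gadget roles) to require rigid marker gadgets, which is dangerous precisely because extra gadgets threaten the dimension bound. In the paper no gadgets are needed: starting from connected bipartite graphs with $|A|, |B| \geq 3$, the edge elements have degree exactly $2$ in $H$ while every vertex of $A \cup B$ has degree at least $3$ (it is adjacent to all of the opposite side), so any isomorphism $H_1 \to H_2$ automatically maps $C_1$ to $C_2$ and $A_1 \cup B_1$ to $A_2 \cup B_2$, and $G_i$ is recovered from $H_i$ by deleting the $A_i$--$B_i$ edges and contracting the degree-$2$ vertices back into edges. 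So the completion trick simultaneously forces interval dimension $2$ and makes the faithfulness argument a two-line degree count; your proposal, lacking it, leaves both halves of the argument unestablished.
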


\begin{figure*}[t]
  \centering
  \psfrag{a1}{$a_1$}
  \psfrag{a2}{$a_2$}
  \psfrag{a3}{$a_3$}
  \psfrag{b1}{$b_1$}
  \psfrag{b2}{$b_2$}
  \psfrag{b3}{$b_3$}
  \psfrag{e1}{$e_1$}
  \psfrag{e2}{$e_2$}
  \psfrag{e3}{$e_3$}
  \psfrag{e4}{$e_4$}
  \psfrag{e5}{$e_5$}
  \psfrag{c1}{$c_1$}
  \psfrag{c2}{$c_2$}
  \psfrag{c3}{$c_3$}
  \psfrag{c4}{$c_4$}
  \psfrag{c5}{$c_5$}
  \psfrag{L1}{$L_1$}
  \psfrag{L2}{$L_2$}
  \psfrag{T1}{\large $T_{c_1}$}
  \psfrag{T4}{\large $T_{c_4}$}
  \psfrag{1a1}{$p_1(a_1)$}
  \psfrag{1a2}{$p_1(a_2)$}
  \psfrag{1a3}{$p_1(a_3)$}
  \psfrag{1b1}{$p_1(b_1)$}
  \psfrag{1b2}{$p_1(b_2)$}
  \psfrag{1b3}{$p_1(b_3)$}
  \psfrag{2a1}{$p_2(a_1)$}
  \psfrag{2a2}{$p_2(a_2)$}
  \psfrag{2a3}{$p_2(a_3)$}
  \psfrag{2b1}{$p_2(b_1)$}
  \psfrag{2b2}{$p_2(b_2)$}
  \psfrag{2b3}{$p_2(b_3)$}
  \psfrag{1l1}{$l_1(c_1)$}
  \psfrag{1r1}{$r_1(c_1)$}
  \psfrag{1I1}{$I_1(c_1)$}
  \psfrag{1l4}{$l_1(c_4)$}
  \psfrag{1r4}{$r_1(c_4)$}
  \psfrag{1I4}{$I_1(c_4)$}
  \psfrag{2l1}{$l_2(c_1)$}
  \psfrag{2r1}{$r_2(c_1)$}
  \psfrag{2I1}{$I_2(c_1)$}
  \psfrag{2l4}{$l_2(c_4)$}
  \psfrag{2r4}{$r_2(c_4)$}
  \psfrag{2I4}{$I_2(c_4)$}
  \psfrag{o}{$o$}
  \begin{minipage}{0.5\hsize}
    \centering
    \subfigure[A bipartite graph $G_0$ with bipartition $(\{a_1, a_2, a_3\}, \{b_1, b_2, b_3\})$.]{\includegraphics[scale=.6]{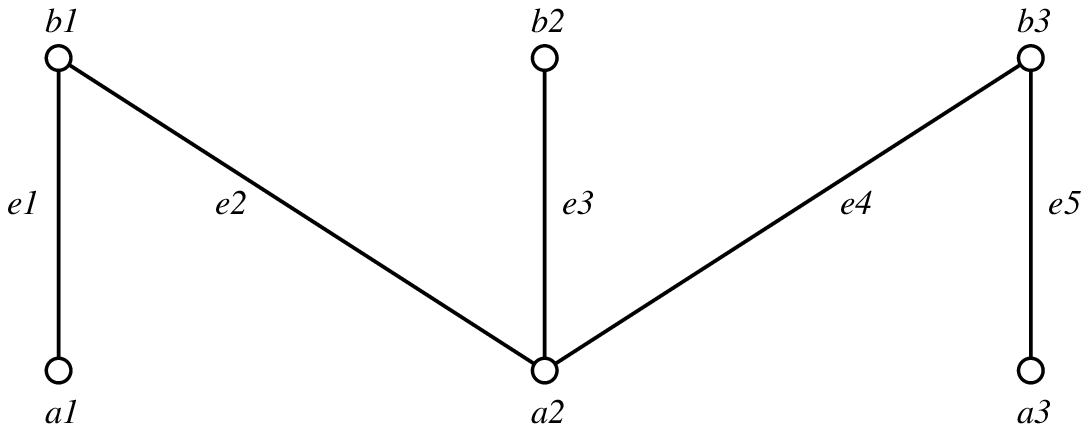}\label{fig:G0}}
    \subfigure[The graph $H_0$ obtained from $G_0$. 
      The vertex $c_i$, $1 \leq i \leq 5$, denoted by a gray point, 
      corresponds to the edge $e_i$ of $G_0$.]{\includegraphics[scale=.6]{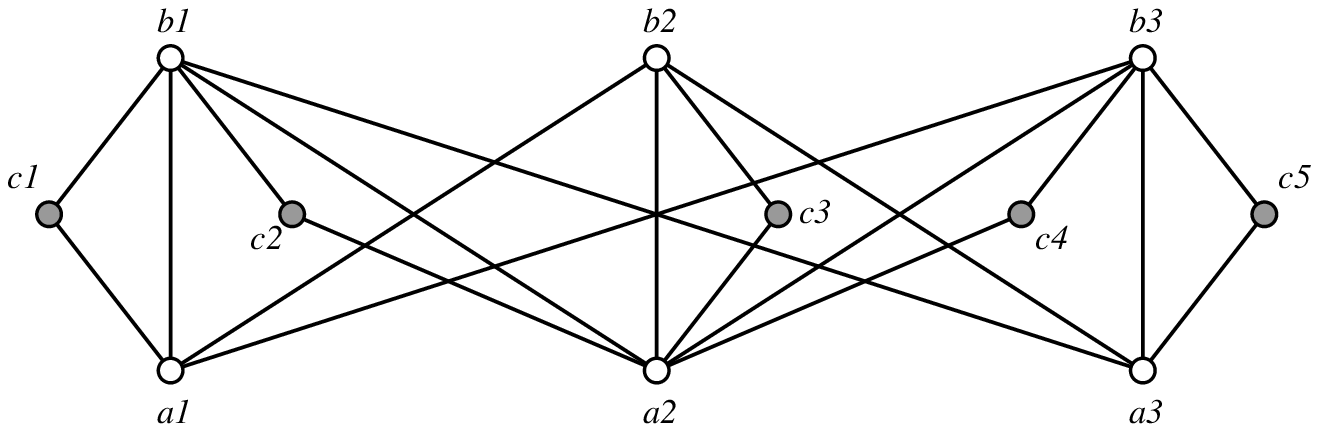}\label{fig:H0}}
  \end{minipage}\hfill
  \begin{minipage}{0.5\hsize}
    \centering
    \subfigure[The interval representations of $P_1$ and $P_2$ that realize $P_{H_0}$ together with the trapezoid representation of the complement of $H_0$. The intervals and trapezoids corresponding to the vertices of $A \cup B$ are degenerated to the points and the line segments, respectively. The intervals and trapezoids corresponding to $c_2, c_3, c_5$ are omitted for the simplicity.]{\includegraphics[scale=.6]{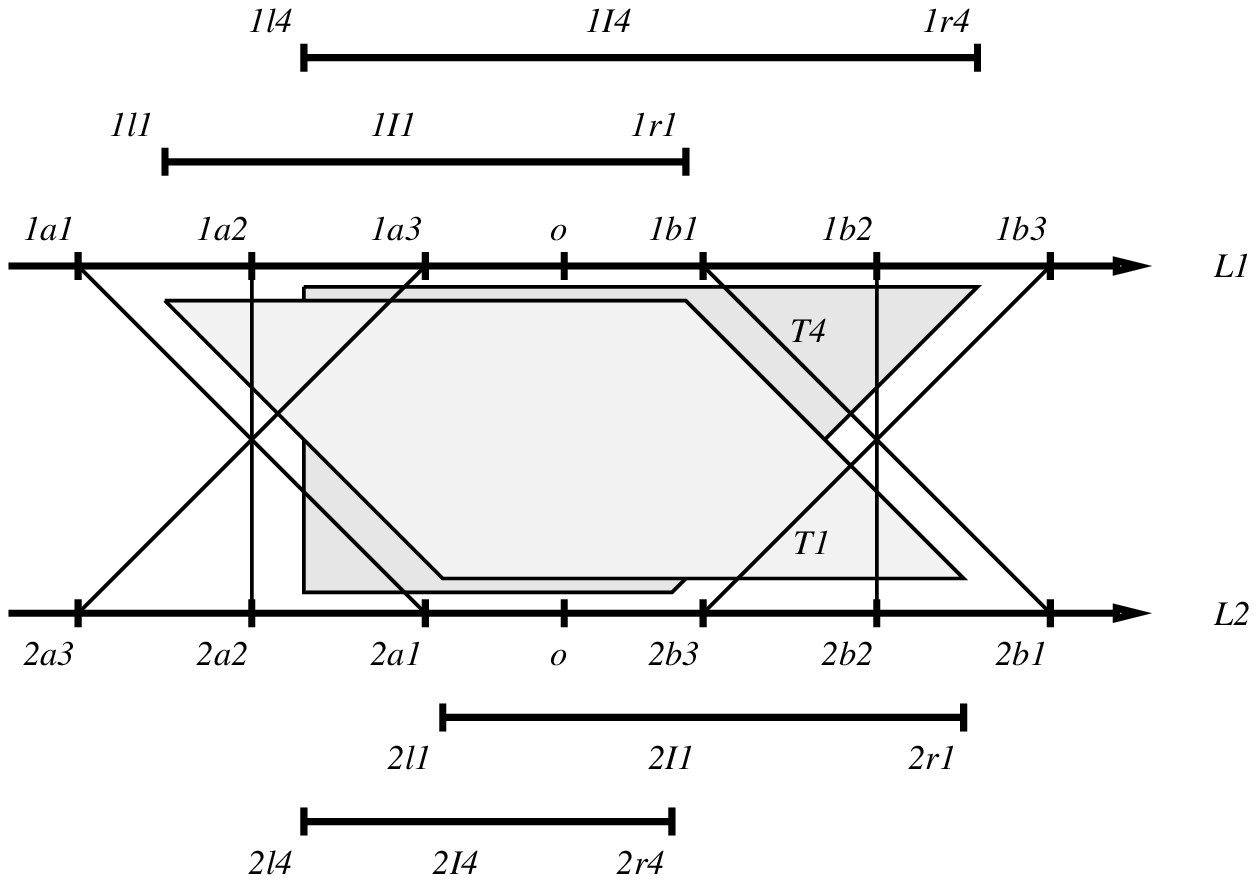}\label{fig:representation}}
  \end{minipage}
  \caption{An example of the construction of the comparability graph (Figs.~\ref{fig:G0} and~\ref{fig:H0}) and the representation of the pair of interval orders that realizes the poset (Fig.~\ref{fig:representation}).}
  \label{fig:figs}
\end{figure*}

Since a graph is a comparability graph of 
a poset with interval dimension at most 2 
if and only if it is the complement of a trapezoid graph~\cite{DGP88-DAM}, 
we have the following. 

\begin{corollary}
The graph isomorphism problem is GI-complete 
for trapezoid graphs. \qed
\end{corollary}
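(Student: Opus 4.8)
The plan is to obtain the Corollary from Theorem~\ref{theorem:GI-trapezoid} by taking complements. Recall that a graph class is GI-complete when its isomorphism problem is polynomial-time equivalent to the problem on arbitrary graphs, so two reductions are needed. One direction is immediate: every trapezoid graph is a graph, so the isomorphism problem restricted to trapezoid graphs reduces to the general problem by the identity map. Only the reverse reduction carries content.

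For that reverse reduction, write $C$ for the class of comparability graphs of posets with interval dimension $2$ and height $3$. Theorem~\ref{theorem:GI-trapezoid} supplies a polynomial-time reduction from general graph isomorphism to isomorphism on $C$, and I would post-compose it with complementation. Given a pair $(H_1, H_2)$ with $H_1, H_2 \in C$, I output $(\bar{H_1}, \bar{H_2})$. Each $H_i$ is the comparability graph of a poset of interval dimension $2$, hence of interval dimension at most $2$, so by the characterization of~\cite{DGP88-DAM} its complement $\bar{H_i}$ is a trapezoid graph, and $\bar{H_i}$ is clearly computable from $H_i$ in polynomial time. Since $H_1 \simeq H_2$ if and only if $\bar{H_1} \simeq \bar{H_2}$, the map is a correct reduction; chaining it after the reduction of Theorem~\ref{theorem:GI-trapezoid} reduces general graph isomorphism to isomorphism on trapezoid graphs.

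Combining the two directions yields the polynomial-time equivalence, and hence the GI-completeness, claimed in the Corollary. Within this argument the only delicate points are that complementation preserves non-isomorphism as well as isomorphism (both follow at once from the stated biconditional) and that the complement of an interval-dimension-at-most-$2$ comparability graph really is a trapezoid graph, which is exactly the cited equivalence of~\cite{DGP88-DAM}. Everything else is bookkeeping, so I expect no genuine obstacle in the Corollary itself; the substance instead sits in Theorem~\ref{theorem:GI-trapezoid}, whose deferred proof must build from an arbitrary (say bipartite) input a height-$3$, interval-dimension-$2$ poset whose comparability graph encodes the input faithfully, and I anticipate that the real difficulty there will be to exhibit two interval orders realizing the poset while ensuring that every isomorphism of the comparability graphs descends to an isomorphism of the original graphs.
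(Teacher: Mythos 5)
Your proposal is correct and follows the same route as the paper: the paper also derives the corollary from Theorem~\ref{theorem:GI-trapezoid} by complementation, citing the characterization of~\cite{DGP88-DAM} that a graph is a comparability graph of a poset with interval dimension at most 2 if and only if it is the complement of a trapezoid graph, together with the fact that $G_1 \simeq G_2$ if and only if $\bar{G_1} \simeq \bar{G_2}$. Your write-up merely makes explicit the bookkeeping (the trivial forward reduction and the polynomial-time computability of complements) that the paper leaves implicit.
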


Theorem~\ref{theorem:GI-trapezoid} also gives a dichotomy 
for the graph isomorphism problem for comparability graphs of posets 
with interval dimension at most 2, 
since the problem can be solved in polynomial time 
if the height of the poset is at most~2. 
\begin{proposition}
The graph isomorphism problem can be solved in $O(n^2)$ time 
for comparability graphs of posets 
with interval dimension at most 2 and height at most 2, 
where $n$ is the number of vertices of a graph. 
\end{proposition}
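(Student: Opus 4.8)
The plan is to use the height bound to reduce to bipartite graphs, and then to use interval dimension $2$ to cut this down to a subclass of bipartite graphs that admits an isomorphism-invariant canonical labelling.

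First I would observe that every comparability graph $G$ of a poset $P=(X,\preceq)$ of height at most $2$ is bipartite. Since no chain $x\prec y\prec z$ exists, every element is minimal or maximal: an element that were neither would have some $w\prec x$ and some $x\prec z$, producing a $3$-chain. Moreover, if $x\prec y$ then $x$ is minimal and $y$ is maximal, so every comparable pair joins the set $A$ of non-maximal elements to the disjoint set $B$ of non-minimal elements, while the remaining (isolated) elements contribute no edges. Hence $E(G)$ lies entirely between $A$ and $B$, and $G$ is bipartite. This step is routine, but the point to keep in mind is that bipartiteness alone is useless: graph isomorphism is already GI-complete for bipartite graphs, so the hypothesis of interval dimension $2$ must carry the whole weight of the argument.

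Next I would turn interval dimension $2$ into a geometric description of the adjacencies. Realising $P$ by two interval orders with interval representations $[l_1(x),r_1(x)]$ and $[l_2(x),r_2(x)]$, one has $a\prec b$ iff $r_1(a)<l_1(b)$ and $r_2(a)<l_2(b)$. Mapping each $b\in B$ to the point $(l_1(b),l_2(b))$ and each $a\in A$ to the point $(r_1(a),r_2(a))$, adjacency becomes coordinatewise dominance: $ab\in E(G)$ iff the point of $b$ strictly dominates that of $a$. Thus each $N(a)$ is an upper-right quadrant of the $B$-points, i.e.\ the intersection of a suffix of the order by the first coordinate with a suffix of the order by the second, and symmetrically each $N(b)$ is a lower-left quadrant of the $A$-points. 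Equivalently, the bipartite adjacency relation has Ferrers dimension at most $2$, and the neighbourhood-containment order on each side is a two-dimensional partial order carried by two conjugate linear orders. (One checks that this genuinely excludes hard instances: the ``diagonal'' crowns and $C_6$, which have higher interval dimension, fail to admit such a quadrant representation.)

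Finally I would build a canonical form from this representation and compare two graphs in $O(n^2)$ time. The idea is to reconstruct, from the graph alone, the two conjugate linear orders on each side — their common refinement is forced by neighbourhood containment, and the two-dimensional structure dictates how dominance-incomparable vertices interleave — then to sort both sides accordingly and read off a canonical adjacency matrix, so that two inputs are isomorphic iff their canonical matrices coincide, an $O(n^2)$ test. I expect the crux to be precisely this \emph{canonicalisation}: the interval representation is not part of the input and is highly non-unique, so one must recover the pair of orders purely combinatorially, prove that the resulting labelling is an isomorphism invariant, and absorb the unavoidable symmetries — interchanging the two interval orders, the duality swapping $A$ and $B$, the independent handling of connected components, and the presence of isolated vertices and of \emph{twins} (vertices with equal neighbourhoods) — all while staying within the quadratic budget.
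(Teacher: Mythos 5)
Your first two steps are correct and match known structure: height at most $2$ forces bipartiteness, and interval dimension at most $2$ yields exactly the dominance (quadrant) representation, so what you have shown is that these graphs are precisely the 2-directional orthogonal ray graphs that the paper itself mentions after the proposition. Note, however, that the paper's actual proof does none of this directly: it observes that the complements of these comparability graphs are circular-arc graphs with clique-cover number $2$~\cite{TTU14-IEICE,TM76-DM}, invokes the known linear-time isomorphism algorithm for that class~\cite{CLMNSSS13-DMTCS,Eschen97-phd}, and charges $O(n^2)$ only for complementation. The entire algorithmic content is delegated to those citations.

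Your third step, by contrast, is a genuine gap --- and you concede as much when you write that you ``expect the crux to be precisely this canonicalisation.'' The assertion that the two conjugate linear orders can be recovered ``purely combinatorially'' because ``their common refinement is forced by neighbourhood containment'' is not a proof, and it is not even straightforwardly true: neighbourhood containment yields only a partial order on each side, and that partial order, being two-dimensional, generally admits many realizers and exponentially many linear extensions; nothing in your sketch singles out a canonical one. Moreover, the symmetries you list --- swapping the two interval orders, swapping the sides $A$ and $B$, permuting components, interleaving dominance-incomparable vertices and twins --- are exactly where the difficulty of canonisation lives, and you give no mechanism for quotienting them out, let alone within an $O(n^2)$ budget. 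Constructing such a canonical form for this class is essentially the content of the cited work of Curtis et al.~\cite{CLMNSSS13-DMTCS} (via canonical representations related to the circular-ones property); it is a research-paper-scale argument, not a routine sorting step. As it stands, your proposal is a correct structural reduction of the proposition to the statement ``isomorphism of 2-directional orthogonal ray graphs is decidable in $O(n^2)$ time,'' together with an unproven claim that this statement holds; to close the gap you must either carry out the canonisation in detail or, as the paper does, reduce to a class whose isomorphism problem is already solved in the literature.
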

\begin{proof}
The complements of comparability graphs of posets 
with interval dimension at most 2 and height at most 2 are circular-arc graphs 
with clique-cover number~2~\cite{TTU14-IEICE,TM76-DM}, 
for which the graph isomorphism problem can be solved 
in linear time~\cite{CLMNSSS13-DMTCS,Eschen97-phd}. 
Since it requires $O(n^2)$ time to take the complements of graphs, 
we have the proposition. 
\end{proof}

Comparability graphs of posets with interval dimension at most 2 and 
height at most 2 are also known 
as \emph{2-directional orthogonal ray graphs}~\cite{STU10-DAM,TTU14-IEICE}. 
See~\cite{Uehara14-DMTCS} for more information on 
the isomorphism of these graphs. 

\section{Proof of Theorem~\ref{theorem:GI-trapezoid}}
The graph isomorphism problem is GI-complete 
for connected bipartite graphs~\cite{BC79-TR}. 
We show a polynomial-time reduction from the problem 
for connected bipartite graphs to the problem 
for comparability graphs of posets 
with interval dimension 2 and height 3. 
The reduction is similar to that of~\cite{Uehara13-Algorithms,UTN05-DAM}. 

Let $G$ be a connected bipartite graph with bipartition $(A, B)$ 
with $|A|, |B| \geq 3$. 
We construct a graph $H$ from $G$ 
\begin{itemize}
\item by replacing each edge $e = ab$ of $G$ with a vertex $c_e$ 
together with edges $ac_e$ and $bc_e$, and 
\item by adding edges so that the subgraph induced by $A \cup B$ 
is a complete bipartite graph with bipartition $(A, B)$. 
\end{itemize}
See Figs.~\ref{fig:G0} and~\ref{fig:H0} for example of the construction. 
The graph $H$ can be constructed in polynomial time. 
Let $C = \{c_e \mid e \in E(G)\}$, and 
we call $(A, B, C)$ the \emph{tripartition} of $H$. 

We first show that two connected bipartite graphs $G_1$ and $G_2$ 
are isomorphic if and only if $H_1$ and $H_2$ are isomorphic. 
Since it is obvious that $H_1 \simeq H_2$ if $G_1 \simeq G_2$, 
we show the other direction. 
Let $(A_i, B_i, C_i)$ be the tripartition of $H_i$ 
for each $i \in \{1, 2\}$. 
The degree of all vertices of $C_i$ are 2 
and the degree of the other vertices are at least 3, 
since $|A_i|, |B_i| \geq 3$. 
Hence, an isomorphism from $H_1$ to $H_2$ maps 
the vertices of $C_1$ to the vertices of $C_2$ and 
maps the vertices of $A_1 \cup B_1$ to the vertices of $A_2 \cup B_2$. 
Since $G_i$ can be obtained from $H_i$ by 
deleting all edges between the vertices of $A_i \cup B_i$, and 
by deleting each $c \in C_i$ and adding an edge joining two vertices adjacent to $c$, 
we conclude that $G_1 \simeq G_2$ if $H_1 \simeq H_2$. 

We next show that $H$ is the comparability graph of a poset 
with interval dimension at most 2 and height 3. 
Let $P_H= (V(H), \preceq)$ be the poset obtained from $H$ 
with tripartition $(A, B, C)$ such that 
$a \prec b$, $a \prec c$, and $c \prec b$ 
for any $a \in A$, $b \in B$, and $c \in C$. 
It is easy to verify that the relation $\prec$ is transitive 
and the height of the poset $P_H$ is 3. 

Now, it suffices to show the interval representations of 
two interval orders $P_1$ and $P_2$ that realize $P_H$. 
Let $\{I_i(v) \mid v \in V(H)\}$ be the interval representation of $P_i$ 
for each $i \in \{1, 2\}$. 
The interval $I_i(v)$ for any $v \in A \cup B$ 
is degenerated to the point $p_i(v)$ in the representations. 
For any $c \in C_i$, let $l_i(c)$ and $r_i(c)$ be 
the left and right end-point of $I_i(c)$, respectively. 
We denote the representations by a series of points on the real line. 
Let $o$ be the origin of the real line, and let 
$A = \{a_1, a_2, \ldots, a_s\}$ and $B = \{b_1, b_2, \ldots, b_t\}$. 
We place the points corresponding to the elements of $A \cup B$ 
on the real line such that 
%
%
\begin{flalign*}
p_1(a_1) & < p_1(a_2) < \ldots < p_1(a_s) < o \\
& < p_1(b_1) < p_1(b_2) < \ldots < p_1(b_t) 
\end{flalign*}
and 
\begin{flalign*}
p_2(a_s) & < p_2(a_{s-1}) < \ldots < p_2(a_1) < o \\
& < p_2(b_t) < p_2(b_{t-1}) < \ldots < p_2(b_1). 
\end{flalign*}
We can verify that $a \prec b$ for any $a \in A$ and $b \in B$, 
any pair of elements of $A$ are incomparable, and 
any pair of elements of $B$ are incomparable. 

Let $c \in C$ be a vertex of $H$ adjacent to $a_i \in A$ and $b_j \in B$. 
We place the end-points $l_1(c)$ and $r_1(c)$ such that 
\[p_1(a_i) \! < \! l_1(c) \! < \! p_1(a_{i+1}) \leq o \leq p_1(b_{j-1}) \! < \! r_1(c) \! < \! p_1(b_j)\] 
and place the end-points $l_2(c)$ and $r_2(c)$ such that 
\[p_2(a_i) \! < \! l_2(c) \! < \! p_2(a_{i-1}) \leq o \leq p_2(b_{j+1}) \! < \! r_2(c) \! < \! p_2(b_j),\]
where $p_1(a_{s+1}) = p_1(b_0) = o$ and $p_2(a_0) = p_2(b_{t+1}) = o$. 
When more than one vertex is adjacent 
to a vertex $a_i \in A$ (resp. $b_j \in B$), 
we place the left (resp. right) end-points any order 
in the intervals $[p_1(a_i), p_1(a_{i+1})]$ and $[p_2(a_i), p_2(a_{i-1})]$ 
(resp. 
in the intervals $[p_1(b_{j-1}), p_1(b_j)]$ and $[p_2(b_{j+1}), p_2(b_j)]$). 
It can be verified that $a_i \prec c \prec b_j$ and 
$c$ is incomparable to any other element of $A \cup B$. 
Moreover, any pair of elements of $C$ are incomparable, 
since any interval corresponding to an element of $C$ 
contains the origin $o$. 
Hence, the interval orders $P_1$ and $P_2$ realize $P_H$, 
and we have Theorem~1. 

Fig.~\ref{fig:representation} shows 
the interval representations of the pair of posets that realizes $P_{H_0}$. 
The trapezoid representation of the complement of $H_0$ 
is also shown in Fig.~\ref{fig:representation}. 

\section{Concluding remarks}
We show in this paper that 
the graph isomorphism problem is GI-complete for trapezoid graphs. 
Since the problem can be solved in linear time for 
interval graphs~\cite{LB79-JACM} and 
permutation graphs~\cite{Colbourn81-Networks,Spinrad85-SIAMCOMP}, 
it is an interesting open question to determine the complexity of 
the problem for graph classes 
between trapezoid graphs and interval graphs or 
between trapezoid graphs and permutation graphs. 
Examples of such graphs are 
\emph{parallelogram graphs}~\cite{BFIL95-DAM,GMT84-DAM,MSZ11-SIAMCOMP}, 
\emph{triangle graphs}~\cite{CK87-CN,Mertzios12-TCS} and 
\emph{simple-triangle graphs}~\cite{CK87-CN,Mertzios13-LNCS}. 
Other open problems can be found in~\cite{Spinrad03,Uehara14-DMTCS}.


\end{document}